\documentclass[10pt,conference,a4paper]{IEEEtran}
%\documentclass[12pt, draftclsnofoot, onecolumn,romanappendices]{IEEEtran}
% Om Gam Ganapatayae Namah! %
\usepackage{fancyhdr} 
\usepackage[USenglish,american]{babel}
\usepackage{epsfig,graphics,subfigure,graphicx,latexsym,longtable,amsmath,amscd,latexsym,amssymb,mathrsfs,syntonly,eucal}
\usepackage{multirow}
\usepackage[usenames]{color}
\usepackage[T1]{fontenc}
\usepackage{bm,cite}
\usepackage{amsbsy}
\usepackage{latexsym}
\usepackage{wasysym}
\usepackage{placeins}
\usepackage[lined,boxed,commentsnumbered]{algorithm2e}
\usepackage{lipsum}
\usepackage{url}
\usepackage{array}
\usepackage{tabu}
\SetKwInput{KwInput}{Input}
\SetKwInput{KwOutput}{Output}
\usepackage{booktabs,siunitx}
\usepackage{booktabs}% http://ctan.org/pkg/booktabs
\usepackage{makecell}

\setlength{\columnsep}{0.242in}
\usepackage[a4paper,
        bindingoffset=0.05in,
        left=14mm,
        right=14mm,
        top=1.9cm,
        bottom=4.3cm,
        %footskip=.2in
        ]{geometry} % this can solve the margins

 \newcommand\blfootnote[1]{%
  \begingroup
  \renewcommand\thefootnote{}\footnote{#1}%
  \addtocounter{footnote}{-1}%
  \endgroup
}

\usepackage[acronyms,nonumberlist,nopostdot,nomain,nogroupskip,acronymlists={hidden}]{glossaries}
\newglossary[algh]{hidden}{acrh}{acnh}{Hidden Acronyms}
\glsdisablehyper
\newacronym{3gpp}{3GPP}{3rd Generation Partnership Project}
\newacronym{4g}{4G}{4th generation}
\newacronym{5g}{5G}{5th generation}
\newacronym{6g}{6G}{6th generation}
\newacronym{5gc}{5GC}{5G Core}
\newacronym{adc}{ADC}{Analog to Digital Converter}
\newacronym{aerpaw}{AERPAW}{Aerial Experimentation and Research Platform for Advanced Wireless}
\newacronym{ai}{AI}{Artificial Intelligence}
\newacronym{aimd}{AIMD}{Additive Increase Multiplicative Decrease}
\newacronym{am}{AM}{Acknowledged Mode}
\newacronym{amc}{AMC}{Adaptive Modulation and Coding}
\newacronym{amf}{AMF}{Access and Mobility Management Function}
\newacronym{aoa}{AoA}{Angle of Arrival}
\newacronym{aops}{AOPS}{Adaptive Order Prediction Scheduling}
\newacronym{api}{API}{Application Programming Interface}
\newacronym{apn}{APN}{Access Point Name}
\newacronym{aqm}{AQM}{Active Queue Management}
\newacronym{ausf}{AUSF}{Authentication Server Function}
\newacronym{avc}{AVC}{Advanced Video Coding}
\newacronym{awgn}{AGWN}{Additive White Gaussian Noise}
\newacronym{balia}{BALIA}{Balanced Link Adaptation Algorithm}
\newacronym{bbu}{BBU}{Base Band Unit}
\newacronym{bdp}{BDP}{Bandwidth-Delay Product}
\newacronym{ber}{BER}{Bit Error Rate}
\newacronym{bf}{BF}{Beamforming}
\newacronym{bler}{BLER}{Block Error Rate}
\newacronym{brr}{BRR}{Bayesian Ridge Regressor}
\newacronym{bsr}{BSR}{Buffer Status Report}
\newacronym{bs}{BS}{Base Station}
\newacronym{bss}{BSS}{Business Support System}
\newacronym{ca}{CA}{Carrier Aggregation}
\newacronym{caas}{CaaS}{Connectivity-as-a-Service}
\newacronym{cb}{CB}{Code Block}
\newacronym{cc}{CC}{Congestion Control}
\newacronym{ccid}{CCID}{Congestion Control ID}
\newacronym{cco}{CC}{Carrier Component}
\newacronym{cdd}{CDD}{Cyclic Delay Diversity}
\newacronym{cdf}{CDF}{Cumulative Distribution Function}
\newacronym{cdn}{CDN}{Content Distribution Network}
\newacronym{csi-rs}{CSI-RS}{Channel State Information Reference Signal}
\newacronym{cir}{CIR}{Channel Impulse Response}
\newacronym{cn}{CN}{Core Network}
\newacronym{codel}{CoDel}{Controlled Delay Management}
\newacronym{comac}{COMAC}{Converged Multi-Access and Core}
\newacronym{cord}{CORD}{Central Office Re-architected as a Datacenter}
\newacronym{cornet}{CORNET}{COgnitive Radio NETwork}
\newacronym{cosmos}{COSMOS}{Cloud Enhanced Open Software Defined Mobile Wireless Testbed for City-Scale Deployment}
\newacronym{cots}{COTS}{Commercial Off-the-Shelf}
\newacronym{cp}{CP}{Cyclic Prefix}
\newacronym{cpu}{CPU}{Central Processing Unit}
\newacronym{cqi}{CQI}{Channel Quality Information}
\newacronym{cr}{CR}{Cognitive Radio}
\newacronym{cran}{CRAN}{Cloud \gls{ran}}
\newacronym{crc}{CRC}{Cyclic Redundancy Check}
\newacronym{crlb}{CRLB}{Cramér–Rao Lower Bound}
\newacronym{crs}{CRS}{Cell Reference Signal}
\newacronym{csi}{CSI}{Channel State Information}
\newacronym{csirs}{CSI-RS}{Channel State Information - Reference Signal}
\newacronym{cu}{CU}{Central Unit}
\newacronym{d2tcp}{D$^2$TCP}{Deadline-aware Data center TCP}
\newacronym{d3}{D$^3$}{Deadline-Driven Delivery}
\newacronym{dac}{DAC}{Digital to Analog Converter}
\newacronym{dag}{DAG}{Directed Acyclic Graph}
\newacronym{darpa}{DARPA}{Defense Advanced Research Projects Agency}
\newacronym{das}{DAS}{Distributed Antenna System}
\newacronym{dash}{DASH}{Dynamic Adaptive Streaming over HTTP}
\newacronym{dc}{DC}{Dual Connectivity}
\newacronym{dccp}{DCCP}{Datagram Congestion Control Protocol}
\newacronym{dce}{DCE}{Direct Code Execution}
\newacronym{dci}{DCI}{Downlink Control Information}
\newacronym{dcl}{DCL}{Dear Colleague Letter}
\newacronym{dctcp}{DCTCP}{Data Center TCP}
\newacronym{dl}{DL}{Downlink}
\newacronym{dmr}{DMR}{Deadline Miss Ratio}
\newacronym{dmrs}{DMRS}{DeModulation Reference Signal}
\newacronym{drlcc}{DRL-CC}{Deep Reinforcement Learning Congestion Control}
\newacronym{drs}{DRS}{Discovery Reference Signal}
\newacronym{du}{DU}{Distributed Unit}
\newacronym{e2e}{E2E}{end-to-end}
\newacronym{ecaas}{ECaaS}{Edge-Cloud-as-a-Service}
\newacronym{ecn}{ECN}{Explicit Congestion Notification}
\newacronym{edf}{EDF}{Earliest Deadline First}
\newacronym{embb}{eMBB}{Enhanced Mobile Broadband}
\newacronym{empower}{EMPOWER}{EMpowering transatlantic PlatfOrms for advanced WirEless Research}
\newacronym{enb}{eNB}{evolved Node Base}
\newacronym{endc}{EN-DC}{E-UTRAN-\gls{nr} \gls{dc}}
\newacronym{epc}{EPC}{Evolved Packet Core}
\newacronym{eps}{EPS}{Evolved Packet System}
\newacronym{es}{ES}{Edge Server}
\newacronym{etsi}{ETSI}{European Telecommunications Standards Institute}
\newacronym[firstplural=Estimated Times of Arrival (ETAs)]{eta}{ETA}{Estimated Time of Arrival}
\newacronym{eutran}{E-UTRAN}{Evolved Universal Terrestrial Access Network}
\newacronym{faas}{FaaS}{Function-as-a-Service}
\newacronym{fapi}{FAPI}{Functional Application Platform Interface}
\newacronym{fcc}{FCC}{Federal Communications Commission}
\newacronym{fdd}{FDD}{Frequency Division Duplexing}
\newacronym{fdm}{FDM}{Frequency Division Multiplexing}
\newacronym{fdma}{FDMA}{Frequency Division Multiple Access}
\newacronym{fed4fire}{FED4FIRE+}{Federation 4 Future Internet Research and Experimentation Plus}
\newacronym{fir}{FIR}{Finite Impulse Response}
\newacronym{fit}{FIT}{Future \acrlong{iot}}
\newacronym{fmcw}{FMCW}{Frequency-Modulated Continuous-Wave}  
\newacronym{fft}{FFT}{Fourier Transform}
\newacronym{fpga}{FPGA}{Field Programmable Gate Array}
\newacronym{fr2}{FR2}{Frequency Range 2}
\newacronym{fs}{FS}{Fast Switching}
\newacronym{fscc}{FSCC}{Flow Sharing Congestion Control}
\newacronym{ftp}{FTP}{File Transfer Protocol}
\newacronym{fw}{FW}{Flow Window}
\newacronym{ge}{GE}{Gaussian Elimination}
\newacronym{gnb}{gNB}{Base Station}
\newacronym{gop}{GOP}{Group of Pictures}
\newacronym{gpr}{GPR}{Gaussian Process Regressor}
\newacronym{gps}{GPS}{Global Positioning System}
\newacronym{gpu}{GPU}{Graphics Processing Unit}
\newacronym{gtp}{GTP}{GPRS Tunneling Protocol}
\newacronym{gtpc}{GTP-C}{GPRS Tunnelling Protocol Control Plane}
\newacronym{gtpu}{GTP-U}{GPRS Tunnelling Protocol User Plane}
\newacronym{gtpv2c}{GTPv2-C}{\gls{gtp} v2 - Control}
\newacronym{gw}{GW}{Gateway}
\newacronym{harq}{HARQ}{Hybrid Automatic Repeat reQuest}
\newacronym{hetnet}{HetNet}{Heterogeneous Network}
\newacronym{hh}{HH}{Hard Handover}
\newacronym{hol}{HOL}{Head-of-Line}
\newacronym{hqf}{HQF}{Highest-quality-first}
\newacronym{hss}{HSS}{Home Subscription Server}
\newacronym{http}{HTTP}{HyperText Transfer Protocol}
\newacronym{ia}{IA}{Initial Access}
\newacronym{iab}{IAB}{Integrated Access and Backhaul}
\newacronym{ic}{IC}{Incident Command}
\newacronym{ietf}{IETF}{Internet Engineering Task Force}
\newacronym{imsi}{IMSI}{International Mobile Subscriber Identity}
\newacronym{imt}{IMT}{International Mobile Telecommunication}
\newacronym{iot}{IoT}{Internet of Things}
\newacronym{ip}{IP}{Internet Protocol}
\newacronym{isac}{ISaC}{Integrated Sensing and Communication}
\newacronym{isi}{ISI}{Intersymbol Interference}
\newacronym{itu}{ITU}{International Telecommunication Union}
\newacronym{kpi}{KPI}{Key Performance Indicator}
\newacronym{kvm}{KVM}{Kernel-based Virtual Machine}
\newacronym{ldpc}{LDPC}{Low-Density Parity Check}
\newacronym{los}{LoS}{Line-of-Sight}
\newacronym{lsm}{LSM}{Link-to-System Mapping}
\newacronym{lstm}{LSTM}{Long Short Term Memory}
\newacronym{lte}{LTE}{Long Term Evolution}
\newacronym{lxc}{LXC}{Linux Container}
\newacronym{m2m}{M2M}{Machine to Machine}
\newacronym{mac}{MAC}{Medium Access Control}
\newacronym{manet}{MANET}{Mobile Ad Hoc Network}
\newacronym{mano}{MANO}{Management and Orchestration}
\newacronym{mc}{MC}{Multi-Connectivity}
\newacronym{mcc}{MCC}{Mobile Cloud Computing}
\newacronym{mchem}{MCHEM}{Massive Channel Emulator}
\newacronym{mcs}{MCS}{Modulation and Coding Scheme}
\newacronym{mcss}{MCS}{Modulation and Coding Schemes}
\newacronym{mec}{MEC}{Multi-access Edge Computing}
\newacronym{mec2}{MEC}{Mobile Edge Cloud}
\newacronym{mfc}{MFC}{Mobile Fog Computing}
\newacronym{mi}{MI}{Mutual Information}
\newacronym{mib}{MIB}{Master Information Block}
\newacronym{miesm}{MIESM}{Mutual Information Based Effective SINR}
\newacronym{mimo}{MIMO}{Multiple Input, Multiple Output}
\newacronym{mgen}{MGEN}{Multi-Generator}
\newacronym{ml}{ML}{Machine Learning}
\newacronym{mlr}{MLR}{Maximum-local-rate}
\newacronym[plural=\gls{mme}s,firstplural=Mobility Management Entities (MMEs)]{mme}{MME}{Mobility Management Entity}
\newacronym{mse}{MSE}{Mean Square Error}
\newacronym{mmtc}{mMTC}{Massive Machine-Type Communications}
\newacronym{mmwave}{mmWave}{millimeter wave}
\newacronym{mpdccp}{MP-DCCP}{Multipath Datagram Congestion Control Protocol}
\newacronym{mptcp}{MPTCP}{Multipath TCP}
\newacronym{mr}{MR}{Maximum Rate}
\newacronym{mrdc}{MR-DC}{Multi \gls{rat} \gls{dc}}
\newacronym{mss}{MSS}{Maximum Segment Size}
\newacronym{mt}{MT}{Mobile Termination}
\newacronym{mtd}{MTD}{Machine-Type Device}
\newacronym{mtu}{MTU}{Maximum Transmission Unit}
\newacronym{mumimo}{MU-MIMO}{Multi-user \gls{mimo}}
\newacronym{mvno}{MVNO}{Mobile Virtual Network Operator}
\newacronym{nalu}{NALU}{Network Abstraction Layer Unit}
\newacronym{nas}{NAS}{Network Attached Storage}
\newacronym{nbiot}{NB-IoT}{Narrow Band IoT}
\newacronym{nfv}{NFV}{Network Function Virtualization}
\newacronym{nfvi}{NFVI}{Network Function Virtualization Infrastructure}
\newacronym{nic}{NIC}{Network Interface Card}
\newacronym{nlos}{NLOS}{Non-Line-of-Sight}
\newacronym{now}{NOW}{Non Overlapping Window}
\newacronym{nrdz}{NRDZ}{National Radio Dynamic Zone}
\newacronym{nsf}{NSF}{National Science Foundation}
\newacronym{nsm}{NSM}{Network Service Mesh}
\newacronym[type=hidden]{nr}{NR}{New Radio}
\newacronym{nrf}{NRF}{Network Repository Function}
\newacronym{nsa}{NSA}{Non Stand Alone}
\newacronym{nse}{NSE}{Network Slicing Engine}
\newacronym{nssf}{NSSF}{Network Slice Selection Function}
\newacronym{ntp}{NTP}{Network Time Protocol}
\newacronym{o2i}{O2I}{Outdoor to Indoor}
\newacronym{oai}{OAI}{OpenAirInterface}
\newacronym{oaicn}{OAI-CN}{\gls{oai} \acrlong{cn}}
\newacronym{oairan}{OAI-RAN}{\acrlong{oai} \acrlong{ran}}
\newacronym{oam}{OAM}{Operations, Administration and Maintenance}
\newacronym{ofdm}{OFDM}{Orthogonal Frequency Division Multiplexing}
\newacronym{olia}{OLIA}{Opportunistic Linked Increase Algorithm}
\newacronym{omec}{OMEC}{Open Mobile Evolved Core}
\newacronym{onap}{ONAP}{Open Network Automation Platform}
\newacronym{onf}{ONF}{Open Networking Foundation}
\newacronym{onos}{ONOS}{Open Networking Operating System}
\newacronym{oom}{OOM}{\gls{onap} Operations Manager}
\newacronym{opnfv}{OPNFV}{Open Platform for \gls{nfv}}
\newacronym[type=hidden]{oran}{O-RAN}{Open \gls{ran}}
\newacronym{orbit}{ORBIT}{Open-Access Research Testbed for Next-Generation Wireless Networks}
\newacronym{os}{OS}{Operating System}
\newacronym{oss}{OSS}{Operations Support System}
\newacronym{otfs}{OTFS}{Orthogonal Time Frequency Space}
\newacronym{pa}{PA}{Position-aware}
\newacronym{pase}{PASE}{Prioritization, Arbitration, and Self-adjusting Endpoints}
\newacronym{pawr}{PAWR}{Platforms for Advanced Wireless Research}
\newacronym{pbch}{PBCH}{Physical Broadcast Channel}
\newacronym{pcef}{PCEF}{Policy and Charging Enforcement Function}
\newacronym{pcfich}{PCFICH}{Physical Control Format Indicator Channel}
\newacronym{pcrf}{PCRF}{Policy and Charging Rules Function}
\newacronym{pdcch}{PDCCH}{Physical Downlink Control Channel}
\newacronym{pdcp}{PDCP}{Packet Data Convergence Protocol}
\newacronym{pdsch}{PDSCH}{Physical Downlink Shared Channel}
\newacronym{pdu}{PDU}{Packet Data Unit}
\newacronym{pf}{PF}{Proportional Fair}
\newacronym{pgw}{PGW}{Packet Gateway}
\newacronym{phich}{PHICH}{Physical Hybrid ARQ Indicator Channel}
\newacronym{phy}{PHY}{Physical}
\newacronym{pmch}{PMCH}{Physical Multicast Channel}
\newacronym{pmi}{PMI}{Precoding Matrix Indicators}
\newacronym{powder}{POWDER}{Platform for Open Wireless Data-driven Experimental Research}
\newacronym{ppo}{PPO}{Proximal Policy Optimization}
\newacronym{ppp}{PPP}{Poisson Point Process}
\newacronym{prach}{PRACH}{Physical Random Access Channel}
\newacronym{prb}{PRB}{Physical Resource Block}
\newacronym{prs}{PRS}{Positioning Reference Signal}
\newacronym{psnr}{PSNR}{Peak Signal to Noise Ratio}
\newacronym{pss}{PSS}{Primary Synchronization Signal}
\newacronym{pucch}{PUCCH}{Physical Uplink Control Channel}
\newacronym{pusch}{PUSCH}{Physical Uplink Shared Channel}
\newacronym{qam}{QAM}{Quadrature Amplitude Modulation}
\newacronym{qci}{QCI}{\gls{qos} Class Identifier}
\newacronym{qoe}{QoE}{Quality of Experience}
\newacronym{qos}{QoS}{Quality of Service}
\newacronym{quic}{QUIC}{Quick UDP Internet Connections}
\newacronym{ra}{RA}{Random Access}
\newacronym{rach}{RACH}{Random Access Channel}
\newacronym{ran}{RAN}{Radio Access Network}
\newacronym{rar}{RAR}{Random Access Response}
\newacronym[firstplural=Radio Access Technologies (RATs)]{rat}{RAT}{Radio Access Technology}
\newacronym{rcn}{RCN}{Research Coordination Network}
\newacronym{rec}{REC}{Radio Edge Cloud}
\newacronym{red}{RED}{Random Early Detection}
\newacronym{renew}{RENEW}{Reconfigurable Eco-system for Next-generation End-to-end Wireless}
\newacronym{re}{RE}{Resource Element}
\newacronym{rf}{RF}{Radio Frequency}
\newacronym{rfc}{RFC}{Request for Comments}
\newacronym{rfr}{RFR}{Random Forest Regressor}
\newacronym{ric}{RIC}{\gls{ran} Intelligent Controller}
\newacronym{rlc}{RLC}{Radio Link Control}
\newacronym{rlf}{RLF}{Radio Link Failure}
\newacronym{rlnc}{RLNC}{Random Linear Network Coding}
\newacronym{rmse}{RMSE}{Root Mean Squared Error}
\newacronym{rnis}{RNIS}{Radio Network Information Service}
\newacronym{rr}{RR}{Round Robin}
\newacronym{rrc}{RRC}{Radio Resource Control}
\newacronym{rrm}{RRM}{Radio Resource Management}
\newacronym{rru}{RRU}{Remote Radio Unit}
\newacronym{rs}{RS}{Remote Server}
\newacronym{rsrp}{RSRP}{Reference Signal Received Power}
\newacronym{rsrq}{RSRQ}{Reference Signal Received Quality}
\newacronym{rss}{RSS}{Received Signal Strength}
\newacronym{rssi}{RSSI}{Received Signal Strength Indicator}
\newacronym{rtt}{RTT}{Round Trip Time}
\newacronym{ru}{RU}{Radio Unit}
\newacronym{rw}{RW}{Receive Window}
\newacronym{rx}{RX}{Receiver}
\newacronym{s1ap}{S1AP}{S1 Application Protocol}
\newacronym{sa}{SA}{standalone}
\newacronym{sack}{SACK}{Selective Acknowledgment}
\newacronym{sap}{SAP}{Service Access Point}
\newacronym{sc2}{SC2}{Spectrum Collaboration Challenge}
\newacronym{scef}{SCEF}{Service Capability Exposure Function}
\newacronym{sch}{SCH}{Secondary Cell Handover}
\newacronym{scoot}{SCOOT}{Split Cycle Offset Optimization Technique}
\newacronym{sctp}{SCTP}{Stream Control Transmission Protocol}
\newacronym{sdap}{SDAP}{Service Data Adaptation Protocol}
\newacronym{sdk}{SDK}{Software Development Kit}
\newacronym{sdm}{SDM}{Space Division Multiplexing}
\newacronym{sdma}{SDMA}{Spatial Division Multiple Access}
\newacronym{sdn}{SDN}{Software-defined Networking}
\newacronym{sdr}{SDR}{Software-defined Radio}
\newacronym{seba}{SEBA}{SDN-Enabled Broadband Access}
\newacronym{sgsn}{SGSN}{Serving GPRS Support Node}
\newacronym{sgw}{SGW}{Service Gateway}
\newacronym{si}{SI}{Study Item}
\newacronym{sib}{SIB}{Secondary Information Block}
\newacronym{sic}{SIC}{Successive Interference Cancellation}
\newacronym{sinr}{SINR}{Signal to Interference plus Noise Ratio}
\newacronym{sip}{SIP}{Session Initiation Protocol}
\newacronym{siso}{SISO}{Single Input, Single Output}
\newacronym{sla}{SLA}{Service Level Agreement}
\newacronym{sm}{SM}{Saturation Mode}
\newacronym{smf}{SMF}{Session Management Function}
\newacronym{smo}{SMO}{Service Management and Orchestration}
\newacronym{sms}{SMS}{Short Message Service}
\newacronym{smsgmsc}{SMS-GMSC}{\gls{sms}-Gateway}
\newacronym{snr}{SNR}{Signal-to-Noise-Ratio}
\newacronym{son}{SON}{Self-Organizing Network}
\newacronym{sptcp}{SPTCP}{Single Path TCP}
\newacronym{srb}{SRB}{Service Radio Bearer}
\newacronym{srn}{SRN}{Standard Radio Node}
\newacronym{srs}{SRS}{Sounding Reference Signal}
\newacronym{ss}{SS}{Synchronization Signal}
\newacronym{ssb}{SSB}{Synchronization Signal Block}
\newacronym{sss}{SSS}{Secondary Synchronization Signal}
\newacronym{st}{ST}{Spanning Tree}
\newacronym{svc}{SVC}{Scalable Video Coding}
\newacronym{ta}{TA}{Timing Advance}
\newacronym{tb}{TB}{Transport Block}
\newacronym{tcp}{TCP}{Transmission Control Protocol}
\newacronym{tdd}{TDD}{Time Division Duplexing}
\newacronym{tdm}{TDM}{Time Division Multiplexing}
\newacronym{tdma}{TDMA}{Time Division Multiple Access}
\newacronym{tfl}{TfL}{Transport for London}
\newacronym{tfrc}{TFRC}{TCP-Friendly Rate Control}
\newacronym{tft}{TFT}{Traffic Flow Template}
\newacronym{tgen}{TGEN}{Traffic Generator}
\newacronym{tip}{TIP}{Telecom Infra Project}
\newacronym{tm}{TM}{Transparent Mode}
\newacronym{to}{TO}{Telco Operator}
\newacronym{tr}{TR}{Technical Report}
\newacronym{trp}{TRP}{Transmitter Receiver Pair}
\newacronym{ts}{TS}{Technical Specification}
\newacronym{tti}{TTI}{Transmission Time Interval}
\newacronym{ttt}{TTT}{Time-to-Trigger}
\newacronym{tx}{TX}{Transmitter}
\newacronym{uas}{UAS}{Unmanned Aerial System}
\newacronym{uav}{UAV}{Unmanned Aerial Vehicle}
\newacronym{udm}{UDM}{Unified Data Management}
\newacronym{udp}{UDP}{User Datagram Protocol}
\newacronym{udr}{UDR}{Unified Data Repository}
\newacronym{ue}{UE}{User Equipment}
\newacronym{uhd}{UHD}{\gls{usrp} Hardware Driver}
\newacronym{ul}{UL}{Uplink}
\newacronym{ultdoa}{UL-TDoA}{Uplink Time Difference of Arrival}
\newacronym{um}{UM}{Unacknowledged Mode}
\newacronym{uml}{UML}{Unified Modeling Language}
\newacronym{upa}{UPA}{Uniform Planar Array}
\newacronym{upf}{UPF}{User Plane Function}
\newacronym{urllc}{URLLC}{Ultra Reliable and Low Latency Communication}
\newacronym{usa}{U.S.}{United States}
\newacronym{usim}{USIM}{Universal Subscriber Identity Module}
\newacronym{usrp}{USRP}{Universal Software Radio Peripheral}
\newacronym{utc}{UTC}{Urban Traffic Control}
\newacronym{vim}{VIM}{Virtualization Infrastructure Manager}
\newacronym{vm}{VM}{Virtual Machine}
\newacronym{vnf}{VNF}{Virtual Network Function}
\newacronym{volte}{VoLTE}{Voice over \gls{lte}}
\newacronym{voltha}{VOLTHA}{Virtual OLT HArdware Abstraction}
\newacronym{vr}{VR}{Virtual Reality}
\newacronym{vran}{vRAN}{Virtualized \gls{ran}}
\newacronym{vss}{VSS}{Video Streaming Server}
\newacronym{wbf}{WBF}{Wired Bias Function}
\newacronym{wf}{WF}{Wired-first}
\newacronym{wlan}{WLAN}{Wireless Local Area Network}
\newacronym{osm}{OSM}{Open Source \gls{nfv} Management and Orchestration}
\newacronym{pnf}{PNF}{Physical Network Function}
\newacronym{drl}{DRL}{Deep Reinforcement Learning}
\newacronym{mtc}{MTC}{Machine-type Communications}

\newacronym{cif}{CI}{cyberinfrastructure}
\newacronym{sonic}{SONiC}{Software for Open Networking in the Cloud}
\newacronym{ocp}{OCP}{Open Compute Project}
\newacronym{snmp}{SNMP}{Simple Network Management Protocol}
\newacronym{raid}{RAID}{redundant array of independent disks}
\newacronym{nfs}{NFS}{Network File Storage}
\newacronym{ci}{CI}{Continuous Integration}
\newacronym{cd}{CD}{Continuous Deployment}
\newacronym{dtn}{DTN}{Data Transfer Node}

\newacronym{dns}{DNS}{Domain Name Service}
\newacronym{nrpe}{NRPE}{Nagios Remote Plugin Executor}
\newacronym{ldap}{LDAP}{Lightweight Directory Access Protocol}
\newacronym{lan}{LAN}{Local Area Network}
\newacronym{vlan}{VLAN}{Virtual LAN}

\newacronym{ipmi}{IPMI}{Intelligent Platform Management Interface}
\newacronym{tor}{ToR}{Top-of-the-Rack}
\newacronym{lmn}{LMN}{Large Memory Node}
\newacronym{bgp}{BGP}{Border Gateway Protocol}
\newacronym{dhcp}{DHCP}{Dynamic Host Configuration Protocol}
\newacronym{vrf}{VRF}{Virtual Routing and Forwarding}
\newacronym{vpn}{VPN}{Virtual Private Network}
\newacronym{rma}{RMA}{Return Merchandise Authorization}
\newacronym{hpc}{HPC}{High Performance Compute}

\newacronym{nu}{NU}{Northeastern University}
\newacronym{asic}{ASIC}{Application-specific Integrated Circuit}
\newacronym{rdma}{RDMA}{Remote Direct Memory Access}
\newacronym{roce}{RoCE}{RDMA over Converged Ethernet}
\newacronym{ovs}{OVS}{Open vSwitch}
\newacronym{frr}{FRR}{Free Range Routing}
\newacronym{ups}{UPS}{Uninterruptible Power Supply}

\newacronym{ntia}{NTIA}{National Telecommunications and Information Administration}
\newacronym{pii}{PII}{Personal and Identifiable Information}
\newacronym{irb}{IRB}{Institutional Review Board}
\newacronym{doi}{DOI}{Digital Object Identifier}

\newacronym{sdo}{SDO}{Standard-Development Organization}
\newacronym{ose}{OSE}{Open Source Ecosystem}
\newacronym{osc}{OSC}{O-RAN Software Community}
\newacronym{dop}{DOP}{Director of Operations}
\newacronym{pm}{PM}{Program Manager}
\newacronym{excom}{EXCOM}{Executive Committee}
\newacronym{iiot}{IIoT}{Industrial \gls{iot}}
\newacronym{lf}{LF}{Linux Foundation}

\newacronym{wiot}{WIoT}{Institute for the Wireless Internet of Things}

\newacronym{otic}{OTIC}{Open Testing \& Integration Centre}

\newacronym{nofo}{NOFO}{Notice of Funding Opportunity}

\newacronym{onr}{ONR}{Office of Naval Research}
\newacronym{afosr}{AFOSR}{Air Force Office of Scientific Research}
\newacronym{afrl}{AFRL}{Air Force Research Laboratory}
\newacronym{arl}{ARL}{Army Research Laboratory}

\newacronym{arc}{ARC}{Aerial Research Cloud}

\newacronym{mno}{MNO}{Mobile Network Operator}
\newacronym{ct}{CT}{Continuous Testing}
\newacronym{oci}{OCI}{Open Container Initiative}
\newacronym{macsec}{MACsec}{Media Access Control Security}
\newacronym{pt}{PT}{Plain Text}
\newacronym{cuda}{CUDA}{Compute Unified Device Architecture}
\newacronym{cbrs}{CBRS}{Citizen Broadband Radio Service}
\newacronym{sas}{SAS}{Spectrum Access System}
\newacronym{rfi}{RFI}{Radio-Frequency Interference}
\newacronym{pal}{PAL}{Priority Access License}
\newacronym{gaa}{GAA}{General Authorized Access}
\newacronym{esc}{ESC}{Environmental Sensing Capability}
\newacronym{ota}{OTA}{Over-the-Air}

%** mathematical notations *************
\usepackage{bm,cite}
\usepackage{amsmath}
\usepackage{amsbsy}
\usepackage{latexsym}
\usepackage{amssymb}
\usepackage{wasysym}
\usepackage{mathtools}
\usepackage{amsthm}

\newtheorem{proposition}{Proposition}

\DeclareMathAlphabet{\mathbit}{OML}{cmr}{bx}{it}
\DeclareMathAlphabet{\mathsf}{OT1}{cmss}{m}{n}
\DeclareMathAlphabet{\mathTXf}{OT1}{cmss}{bx}{it}

\DeclareMathOperator{\Exp}{E}

\DeclareMathOperator{\SNR}{SNR}
\DeclareMathAlphabet{\mathpzc}{OT1}{pzc}{m}{it}

% Zur Notation von Zufallsgroessen ohne Serifen

%\newcommand{\E}[1]{{\operatorname{E}\left[#1\right]}}
 %for covariance matrices

 %identity matrix
 %null vector or null matrix
 %all-ones vector or all-ones matrix

%\newcommand{\dagger}{{\text{H}}}

%\newcommand{\VdM}{{\tau}}

\newcommand{\E}{{\text{E}}}

\newcommand{\MSE}{{\text{MSE}}}

%\newcommand{\He}{{{\dagger}}}

 % setting subnumbering
 % setting subnumbering, shifted
%****************************************

\title{Bistatic Sensing in 5G NR}

\author{
\IEEEauthorblockN{Rajeev Gangula, Sakthivel Velumani, and Tommaso Melodia 
}
\IEEEauthorblockN{Institute for the Wireless Internet of Things, Northeastern University, Boston, USA
}
}

\usepackage{tikzpagenodes,etoolbox}
\usetikzlibrary{calc}
\usepackage[contents={}]{background}
\AddEverypageHook{%
\ifnumequal{\thepage}{1}{%
    \tikz[remember picture,overlay]{%
        % top
        \node[draw,
        minimum width=1.03\textwidth,
        text width=1.02\textwidth,
        font=\scriptsize
        ]
        at ($(current page header area) - (0,5pt)$)
        {%
        This work has been submitted to the IEEE for possible publication.\\
        Copyright may be transferred without notice, after which this version may no longer be accessible.
        };
    }%
}{}%end ifnumequal
}

\begin{document}

% \makeatletter
% \patchcmd{\@maketitle}
%   {\addvspace{0.5\baselineskip}\egroup} %0.5
%   {\addvspace{-1.21\baselineskip}\egroup} %-1.5
%   {}
%   {}

\maketitle

\begin{abstract}
In this work, we propose and evaluate the performance of a \gls{5g} \gls{nr} bistatic \gls{isac} system.
Unlike the full-duplex monostatic \gls{isac} systems, the bistatic approach enables sensing in the current cellular networks without significantly modifying the transceiver design.
The sensing utilizes data
channels, such as the \gls{pusch}, which carries information on the air interface. 
We provide the maximum likelihood estimator
for the delay and Doppler parameters and 
derive a lower
bound on the \gls{mse} for a single target scenario.
Link-level simulations show that it is possible
to achieve significant throughput while accurately estimating the sensing parameters with \gls{pusch}.
Moreover, the results reveal an interesting
tradeoff between the number of reference symbols,
sensing performance, and throughput in the proposed \gls{5g} \gls{nr} bistatic \gls{isac} system.
\end{abstract} 

\blfootnote{This work is supported by OUSD (R\&E) through Army Research Laboratory Cooperative Agreement Number W911NF-24-2-0065. The views and conclusions contained in this document are those of the authors and should not be interpreted as representing the official policies, either expressed or implied, of the Army Research Laboratory or the U.S. Government. The U.S. Government is authorized to reproduce and distribute reprints for Government purposes notwithstanding any copyright notation herein.}
%---------------------------------------------%
    \section{Introduction} \label{sec:intro}
%---------------------------------------------%

The radio spectrum is a valuable and finite natural resource.
Many technologies that have become essential in our societies, such as radio and television broadcasting, mobile networks, aviation, satellites, radar, and defense services, rely on it.
Historically, communication and radar networks are designed independently and operate on
separate dedicated bands of frequencies.
However, with future wireless networks facing severe spectrum congestion,
the need for spectrum sharing and/or jointly designing communication and sensing (radar) networks has emerged. In this regard, \gls{isac} is expected to become fundamental to next-generation wireless system design.
%For example, \gls{imt}-2030 has identified \gls{isac} as one of the key enabling technologies in \gls{6g} systems \cite{5gamericaswhite2024}.
Initial steps towards standardization in terms of
use cases, requirements, and channel models have already been taken in the \gls{3gpp} \cite{3gpp22837}.

In an \gls{isac} system, communication and sensing functionalities can be performed over a common network infrastructure and radio spectrum through cooperation and/or joint design \cite{Liuisactut2022,gangula2024listen,luo2024integrated}.
While much of the \gls{isac} research is concentrated on waveform design, beamforming, and resource allocation \cite{luo2024integrated,zhou2022integrated}, existing cellular systems already possess some environmental sensing capabilities \cite{kanharetarget2021,evers2024analysis,wei2024multiple,Madownlink2022,khosroshahi2024leveraging,wei5gprs2023,KhosroshahiDoppler2024,NatarajaIsac2024,tapiobistatic2024,strumamulti2010,braunthesis}.
The \gls{ofdm} physical layer waveform used in \gls{5g} \gls{nr} is shown to offer comparable sensing performance in terms of range and Doppler resolution to that of 
traditional radar chirp waveforms \cite{strumamulti2010,braunthesis,fink2015comparision,Gaudio2019ofdmotfs}.

In a cellular \gls{isac} system, sensing functionality can be incorporated into the \gls{gnb} and \gls{ue}.
In a monostatic configuration, a node (UE or gNB) relies on the echo of its transmitted signal to perform sensing. The node can operate in half-duplex or full-duplex mode. While a full-duplex system offers better sensing performance and spectrum utilization, it requires receivers with self-interference cancellation capabilities. In the bistatic scenario, the  
transmitter and receiver are physically separated. 
The gNB and UE can perform sensing based on the signal received in the \gls{ul} and \gls{dl}, respectively.

Several works have examined \gls{isac}-enabled \gls{5g} systems
\cite{Madownlink2022,wei5gprs2023,khosroshahi2024leveraging,KhosroshahiDoppler2024,NatarajaIsac2024,tapiobistatic2024,Lipeformnace2022}.
While the authors in \cite{Madownlink2022,wei5gprs2023,khosroshahi2024leveraging} consider full-duplex monostatic scenario, the bistatic setting is considered in \cite{KhosroshahiDoppler2024,NatarajaIsac2024,tapiobistatic2024,Lipeformnace2022}. 
%While \gls{ue} performs the sensing in \cite{KhosroshahiDoppler2024,NatarajaIsac2024}, \gls{ul} sensing is considered in \cite{tapiobistatic2024,Lipeformnace2022}.
The majority of the bistatic \gls{isac} works in \gls{5g} \gls{nr} 
rely on reference signals such as \gls{prs}, \gls{dmrs} and \gls{csi-rs} for sensing \cite{KhosroshahiDoppler2024,NatarajaIsac2024,tapiobistatic2024}.
%Reference signals are used for channel estimation and time-frequency offset estimation.
However, the resources allocated for these signals are negligible compared to data symbols.
Therefore, unlike the monostatic scenario where a node can use both reference and data symbols for sensing, the sensing performance of bistatic systems in \cite{KhosroshahiDoppler2024,NatarajaIsac2024,tapiobistatic2024} tends to suffer as they depend only on reference signals.

On the other hand, joint sensing and data detection methods that rely on both reference and data symbols in a bistatic \gls{isac} setting is considered in \cite{Lipeformnace2022,zhao2024joint}.
While the authors in \cite{Lipeformnace2022} rely on \gls{sic} technique to remove the \gls{los} path from the received signal and then perform sensing, joint iterative channel parameter estimation (sensing) and data detection is used in \cite{zhao2024joint}. However, none of them assume any channel coding and rate matching in their works, which are essential in any practical wireless system.

In \gls{5g} \gls{nr}, the data packet or \gls{tb} is transmitted over the \gls{pdsch} in the \gls{dl} and \gls{pusch} in the \gls{ul}. If the receiver (\gls{ue} in the \gls{dl} or \gls{gnb} in \gls{ul}) successfully decodes the \gls{tb}, it can reconstruct the modulated data symbols and then use both data and reference symbols for sensing, similar to a monostatic system.
However, successful decoding of a \gls{tb} relies on many factors such as \gls{snr}, underlying wireless channel and its estimate, coding gain, etc,.
While a large number of pilots can increase the channel estimate quality and reduce the \gls{bler}, the overhead results in a lower rate.
Therefore, it is important to validate the bistatic \gls{isac} \gls{5g} system performance.
%Hence, there seems to be a tradeoff between the number of pilots, sensing performance, and communication rate in a bistatic \gls{isac} \gls{5g} system.

In this paper, we study the sensing and throughput performance of the \gls{pusch} in a bistatic scenario.
The sensing relies on both data and pilot symbols. Specifically, the contributions are summarized as follows:

\begin{itemize}
    \item We propose an ISaC receiver which relies on the decoded \gls{5g} \gls{nr} \gls{pusch} data and \gls{dmrs} for sensing.
    \item A maximum likelihood estimator for the delay and Doppler parameters is presented.
    \item Using \gls{crlb} analysis, we provide a bound on the \gls{mse} in a single target scenario. 
    \item The performance of \gls{pusch} is evaluated in terms of sensing \gls{mse} and throughput by numerical evaluations.
\end{itemize}

%The rest of the paper is organized as follows: Section II presents the system model, preliminary description of the PUSCH is provided in Section III, Section IV describes ISaC receiver design and parameter estimation, numerical results are given in Section V and conclusions are drawn in Section VI.

%--------------------------------------------%
\section{System Model} \label{sec:sysmod}
%---------------------------------------------%

We consider a \gls{5g} \gls{nr} bistatic \gls{isac} system with a single antenna \gls{ue} and \gls{gnb}.
The \gls{gnb} and \gls{ue} are located at known coordinates
in a two-dimensional plane.
While the proposed framework can be applied in \gls{dl} and \gls{ul}, without loss of generality, we focus on the \gls{ul} scenario.
The transmitted signal by the \gls{ue} arrives
at the \gls{gnb} via the \gls{los} and $P-1$ reflected paths.
We assume a single-bounce multipath scenario where
each point target causes a reflected path.
All targets lie on the same plane where the \gls{ue} and \gls{gnb} are located.
The considered bistatic geometry is shown in Fig \ref{fig:ch_model}.

\subsection{Transmit Chain}\label{sec:txchain}
The communication and sensing is performed based on the \gls{pusch} transmitted over an \gls{ofdm} slot consisting of $L$ symbols in the time domain and $K$ sub-carriers in the frequency domain.
The OFDM symbol duration is given by
$T_s = T +T_{\text{cp}}$, where $T$ and $T_{\text{cp}}$ represent the data and \gls{cp} duration, respectively.
%The slot duration is thus $LT_s$.
The transmitted baseband signal by the \gls{ue} in this slot is given by

\begin{equation}\label{eq:tx_baseband}
	{x}\left( {t} \right) = \sum\limits_{{{l}} = 0}^{L - 1} {\sum\limits_{k = 0}^{K - 1} x_{k,l} e^ {j2\pi f_k (t-T_{\text{cp}}-lT_s) } }
    {{\rm{rect}}}\left( {{t - l{T_s}}}\right),
\end{equation}
where $f_k=k \Delta f$, is the k-th subcarrier with subcarrier spacing $\Delta f = 1/T$, $x_{k,l}\in \mathbb{C}$ is the symbol transmitted on the $(k,l)$-th \gls{re} ($l$-th symbol and $k$-th subcarrier), and $\rm{rect}(t)$ is one for $t \in [0, T_s ]$ and 0 otherwise.

\subsection{Channel}\label{sec:chan}
The baseband continuous time wireless channel between the \gls{ue} and \gls{gnb} 
is given by
\begin{align}\label{eq:ch_t}
    {h}(t,\tau) = \sum\limits_{p = 0}^{P - 1} \alpha_p \delta(\tau-\tau_p) e^{j2\pi \nu_{p} t},
\end{align}
where $\alpha_p$ denotes the complex channel coefficient,
$\tau_p$ is the delay, and $\nu_p$ represent the Doppler frequency shift of the $p$-th path respectively.
The Dirac-Delta function is represented by $\delta(.)$.
While $p=0$ represents the \gls{los} path, paths with $p\in[1,P-1]$ represent single bounce reflected paths from the point targets.
We assume that the \gls{gnb} and \gls{ue} are static, hence $\nu_0 = 0$. The maximum delay $\tau_{\text{max}} = \max\{\tau_0,\ldots,\tau_{P-1}\}$ and maximum Doppler shift
$\nu_{\text{max}} = \max\{\nu_0,\ldots,\nu_{P-1}\}$ of the channel are assumed to satisfy
$
\tau_{\text{max}} < T, ~\nu_{\text{max}} < \Delta f.
$

\begin{figure}[t]
%\vspace{-0.1in}
\centerline{\includegraphics[width=0.53\textwidth]{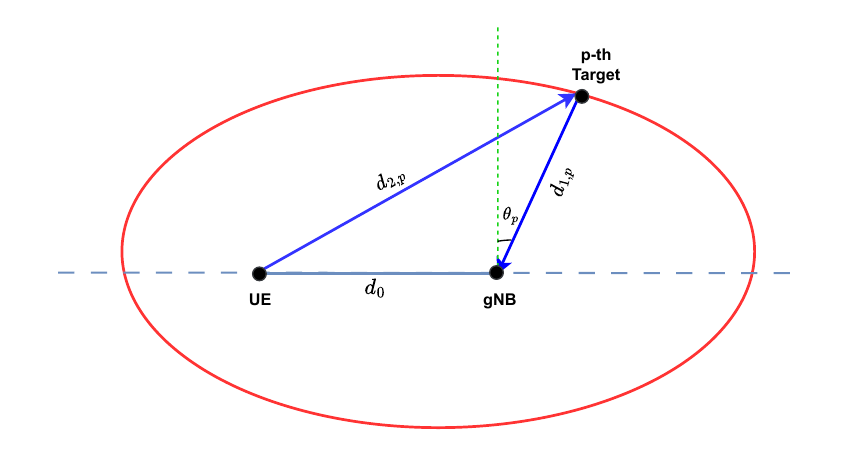}}
\caption{Bistatic geometry with the $p$-th target}
\label{fig:ch_model}
\end{figure}

\subsection{Receiver}\label{sec:rxchain}
Based on the transmitted signal \eqref{eq:tx_baseband} and the channel \eqref{eq:ch_t}, the received signal at the \gls{gnb} can be written as
\begin{equation}\label{eq:rx_signal_t}
    y(t) = \sum\limits_{p = 0}^{P - 1} \alpha_p x(t-\tau_p) e^{j2\pi \nu_{p} t}
    %\bm{a}(\theta_p)
    +{w}(t),
\end{equation}
where ${w}(t)\in\mathbb{C}$ represents the circularly
symmetric complex \gls{awgn}.
Assuming a conventional \gls{ofdm} receiver and \gls{isi} free signal
($\nu_{\text{max}} \ll \Delta f$), the received signal
after sampling and \gls{fft} at the \gls{gnb} is given by
\begin{equation}\label{eq:rx_sig_grid}
	{y}_{k,l} = h_{k,l} x_{k,l}
    %\sum\limits_{{{p}} = 0}^{P - 1} \alpha_p   e^{-j2\pi f_{k}\tau_p} 
    %e^{j2\pi lT\nu_p}x_{k,l}e^ {j2\pi k \phi } \bm{a}(\theta_p)
    +{w}_{k,l}
\end{equation}
where 
\begin{equation}\label{eq:chan_rxgrid}
	h_{k,l} = \sum\limits_{{{p}} = 0}^{P - 1} \alpha_p 
    e^{-j2\pi f_{k}\tau_p}
    e^{j2\pi l T_s \nu_p}
    %x_{k,l}e^ {j2\pi k \phi } \bm{a}(\theta_p)
\end{equation}
where $w_{k,l}\sim\mathcal{N}(0,\sigma^2)$ represents the complex \gls{awgn} at the $(k,l)$-th \gls{re} with $k\in[0,K-1],~l\in[0,L-1]$.
The path amplitudes are normalized, therefore $\sum_{{{p}} = 0}^{P - 1} |\alpha_p|^2 = 1$.
The path \gls{snr} is given by $\SNR_p \triangleq |\alpha_p|^2/\sigma^2$.
%While the LoS path with contributes in communication, it does not contain any sensing information about the target.
%The \gls{gnb} performs decoding and sensing based on the received signal in \eqref{eq:rx_sig_grid}.

\subsection{Target Localization}\label{sec:rxchain}

To find the position and velocity of the target, the \gls{gnb} has to
derive delay and Doppler parameters from an estimate of the \gls{ul} \gls{cir} in \eqref{eq:chan_rxgrid}. The gNB can measure the time delay
between the direct signal (delay in LoS path) and the  
path reflected off the $p$-th target as
\begin{align} \label{eq:bistat_tdoa}
    \Delta \tau_p &= \tau_p - \tau_0, \\\label{eq:bistat_1}
    & = (d_{1,p}+d_{2,p})/{c}-{d_0}/{c}, 
\end{align}
where $d_{1,p}$ is the distance of the $p$-th target from the \gls{gnb} and, $d_{2,p}$ is the distance of the $p$-th target from the \gls{ue} , $d_0$ is the LoS distance between the \gls{ue} and \gls{gnb}, and $c$ is the speed of light.
The bistatic geometry is shown in Fig \ref{fig:ch_model}.
Equation \eqref{eq:bistat_1} defines an ellipse, with the gNB and UE as the two focal points.
If the gNB is able to measure the \gls{aoa} $\theta_p$ of the reflected path, the target range $d_{1,p}$ and the relative velocity $v_p$ can be obtained by \cite{greco2011cramer}
\begin{equation}
    d_{1,p} = \frac{d_p^2 - d_0^2}{2(d_p+ d_0 \sin \theta_p)},
\end{equation}
and
\begin{equation}
    \nu_p = \frac{2f_c}{c}v_p
    \sqrt{\frac{1}{2}+
    \frac{d_{1,p}+d_0 \sin \theta_p}{2\sqrt{d_{1,p}^2+d_0^2+2d_{1,p}d_0 \sin\theta_p}}},
\end{equation}
where the multi-path distance
$
 d_p \triangleq d_{1,p}+d_{2,p},
$
$f_c$ is the carrier frequency.
If the \gls{gnb} has perfect knowledge
of the LoS delay $\tau_0$ and \gls{aoa} $\theta_p~,\forall p$,
from delay ($\tau_p$) and Doppler ($\nu_p$) estimates, the gNB can estimate the relative velocity and position of the target.
The \gls{aoa} can be estimated with a \gls{gnb} having multiple antennas. However, it is out of the scope of this work.
In this paper the sensing task consists of estimating the delay and Doppler parameters $(\tau_p,\nu_p)$.

%We now proceed with a brief preliminary description of the \gls{pusch} in \gls{5g} \gls{nr}. 

%%%%%%%%%%%%%%%%%%%%%%%%%%%%%%%%%%%%%%%%%%%%%%%%%%%%%%%%%
\section{5G NR PUSCH}\label{sec:5gpusch}
%%%%%%%%%%%%%%%%%%%%%%%%%%%%%%%%%%%%%%%%%%%%%%%%%%%%%%%%%

The \gls{pusch} is the primary physical uplink channel through
which the \gls{ue} transmits data to the \gls{gnb}.
The data payload in the form of \gls{tb} (from higher layers)
arrives at the \gls{phy} layer of the \gls{ue}.
The \gls{phy} layer performs several operations: \gls{crc} attachment for error detection, channel coding, rate matching, and modulation on the \gls{tb} to obtain \gls{pusch} data symbols.
Depending on the chosen \gls{mcs}, different modulation formats, such as QPSK, 16-QAM, or 64-QAM, can be used. 
The \gls{pusch} is associated with reference or pilot signals in the form of \gls{dmrs}, which are used for channel estimation as part of the coherent \gls{pusch} demodulation process. The
detailed \gls{dmrs} configuration and mapping of \gls{pusch} into the resource grid is
defined in \cite{3gpp2018nr_38_212,3gpp2018nr_38_214}.
The \gls{ofdm} waveform generation is described in Section \ref{sec:txchain}, and the \gls{pusch} transceiver chain is shown in Fig \ref{fig:pusch_isac}.

Moreover, \gls{harq} is used to improve the \gls{bler} in \gls{5g} \gls{nr}.
\gls{harq} allows for multiple retransmissions of a \gls{tb}
before declaring a decoding failure.
Each retransmission is associated with additional redundancy bits, which improves the probability of successful decoding when combined with previous
round data.
A maximum of four retransmissions of a \gls{tb} is allowed.
The details are provided in the \gls{3gpp} standards \cite{3gpp2018nr_38_212}.

\begin{figure*}
%\vspace{-0.1in}
\centerline{\includegraphics[width=14.5cm,height=5cm]{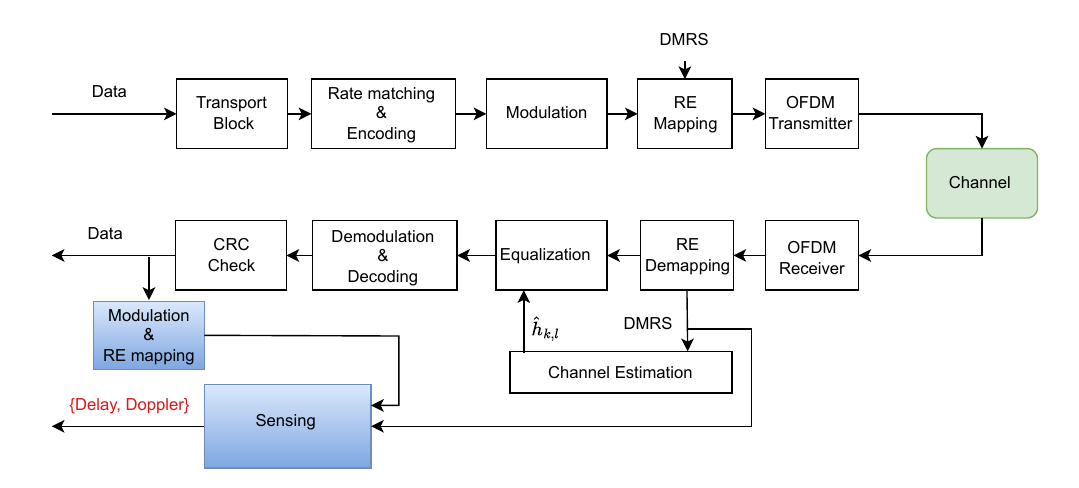}}
\caption{ISaC PUSCH Transceiver}
\label{fig:pusch_isac}
\end{figure*}

%The receiver processing chain in 5G NR starts with the OFDM demodulator, which converts the received time-domain signal into the frequency domain and removes the cyclic prefix. The demodulated signal is then layer and RE unmapped to separate signals from multiple layers and align them with their corresponding Resource Elements (REs). MMSE channel estimation is performed using Demodulation Reference Signal (DMRS) to derive CSI, followed by MMSE equalization to mitigate channel fading and interference. The equalized data undergoes soft demodulation, converting the symbols into Log-Likelihood Ratios, which are then descrambled to restore the original encoded sequence. Rate dematching reverses the rate matching process, restoring the encoded sequence length, which is then passed to the LDPC decoder for error correction. The decoded data is desegmented, and the CRC is detached and checked for errors. Finally, the error-checked data is passed to the bitstream receiver, and the BLER calculation evaluates the overall system performance. This structured process ensures reliable recovery of transmitted data in complex wireless environments.

\section{ISAC Receiver}\label{sec:isacreceiver}

The goal of the \gls{isac} receiver, shown in Fig. \ref{fig:pusch_isac},
is to decode the \gls{tb} and estimate the delay and Doppler shift parameters.
We now present the throughput of PUSCH followed by the sensing performance analysis.

%defined by the 3GPP standard \cite{3gpp2018nr_38_211,3gpp2018nr_38_212,3gpp2018nr_38_214}.

\subsection{Communication Throughput}\label{sec:commrx}
The PUSCH is scheduled over an \gls{ul} slot consisting of $N_{\text{d}}$ data and $N_\text{p}$ \gls{dmrs} \gls{re}s.
We assume that all \gls{re}s of the slot are used for \gls{pusch}, hence, $N_\text{p}+N_{\text{d}} = KL$.
Let $R_\text{mcs}$ and $Q_\text{mcs}$ denote the 
code rate and modulation order for a given \gls{mcs} index
as defined in \cite[Section~6.1.4]{3gpp2018nr_38_214}.
For a given \gls{mcs}, the average throughput with \gls{harq} is given by \cite{wu2010performance}
\begin{equation}
\bar{R} = \frac{N_{\text{d}} Q_\text{mcs} R_\text{mcs}}{\Exp[X]}  \left(1-\prod_{i=1}^{4}P_i\right)
%R_\text{mcs} \sum_{i=1}^{4}\frac{(1-P_i)}{i} \prod_{j=1}^{i}P_j ~ 
\text{bits/slot},
\end{equation}
where $P_i$ denotes the TB decoding error probability in the $i$-th, $i\in[1,4]$, \gls{harq} round and $\Exp[X]$ denotes the
expected number of HARQ rounds per TB transmission.
The error probabilities satisfy $0\leq P_i \leq 1,~P_i \leq P_j ~\text{if}~i > j$, 
and $\sum_{i=1}^{4}P_i = 1$.
 Assuming independence among the probabilities, following \cite{wu2010performance} we have
\begin{equation}\label{eq:rounds}
   \Exp[X] = 1+\sum_{i=1}^3 \prod_{j=1}^i P_j 
\end{equation}

\subsection{Sensing}\label{sec:sensing}

The sensing unit estimates the multi-path parameters 
$(\tau_p,\nu_p),~p\in[1,P-1]$ based on the
received signal in \eqref{eq:rx_sig_grid}.
The \gls{los} path delay $\tau_0$ can be used to estimate the position of the \gls{ue}. However, similar to the work in \cite{Lipeformnace2022}, we assume that the 
\gls{ue} position is known, and the \gls{gnb} can successfully remove the \gls{los} path from the \gls{ul} \gls{cir} for sensing.
Joint \gls{ue} positioning and sensing will be addressed in the extended version of this paper. 

The sensing signal processing is related to the decoding of the \gls{pusch} \gls{tb}.
\textbf{Scenario 1:} On a given slot, the gNB is unable to decode the TB, i.e.,
in a \gls{crc} failure situation, it relies only on DMRS REs for sensing.
\textbf{Scenario 2:} 
In a slot where the TB is successfully decoded, the \gls{gnb} can reconstruct the symbols on data REs, and use all REs for sensing.
The sensing unit is illustrated in Fig \ref{fig:pusch_isac}.
%This is similar to a monostatic full-duplex \gls{isac} scenario. The sensing flow chart is illustrated in Figure \ref{fig:rxsenn}.

% \begin{figure}[t]
% %\includegraphics[width=2in]{IMG/sening_flow.png}
% \centerline{\includegraphics[width=1.9in]{IMG/sening_flow.png}}
% \caption{Sensing process at the receiver}
% \label{fig:rxsenn}
% \end{figure}

For simplicity, we focus on a single target scenario with channel parameters $(\alpha_1,\tau_1,\nu_1)$,
and assume that \gls{los} component has successfully removed from the \gls{cir} for sensing.
If the transmitted symbol $x_{k,l}$ is known to the \gls{isac} receiver, we can undo the phase of the symbol on that RE and obtain
\begin{equation}\label{eq:zkl}
    z_{k,l} = |x_{k,l}| \alpha_1
    e^{-j2\pi f_{k}\tau_1}
    e^{j2\pi l T_s \nu_1}
    + \tilde{w}_{k,l},
\end{equation}
where $|x_{k,l}|$ is the magnitude of the transmitted symbol $x_{k,l}$ and $\tilde{w}_{k,l} \sim \mathcal{N}(0, \sigma^2)$ is circular symmetric
i.i.d. Gaussian noise. We assume a QPSK modulation scheme, hence, $|x_{k,l}|=1,~\forall k,l$.
Note that in \textit{scenario 1} only DMRS is used for sensing, and measurements on non-DMRS REs $z_{k,l} = 0$.

The unknown parameters $\bm{\lambda} = (\alpha_1,\tau_1,\nu_1)$
can be estimated using Maximum Likelihood (ML) estimator.
The ML estimator minimizes the log-likelihood function given by \cite{Gaudio2019ofdmotfs},
\begin{equation}\label{eq:llrs1}
l_1(\bm{Z}|\bm{\lambda}) = 
\underset{(k,l)\in \text{DMRS}}{\sum\sum}
  %\sum_{\substack{k,l\\(k,l)\in \text{DMRS}}}
 \left|z_{k,l}-\alpha_1
%e^{-j2\pi f_{k}\tau_p}
    e^{j2\pi (l T_s \nu_1 - f_{k}\tau_1)}\right|^2 ,
\end{equation}
\begin{equation}\label{eq:llrs2}
l_2(\bm{Z}|\bm{\lambda}) = 
\sum_{k=0}^{K-1} \sum_{l=0}^{L-1} \left|z_{k,l}-\alpha_1  
%e^{-j2\pi f_{k}\tau_p}
    e^{j2\pi (l T_s \nu_1 - f_{k}\tau_1)}\right|^2,
\end{equation}
where $\bm{Z}$ consists $z_{k,l}$ as its elements, \eqref{eq:llrs1} is used in \textit{scenario 1} and \eqref{eq:llrs2}
for \textit{scenario 2}. The ML solution is obtained by computing a two-dimensional periodogram as in \cite{braunthesis,Gaudio2019ofdmotfs}.

\subsection{Fisher Information}\label{sec:crlb} 

Considering the parameter vector $\bm{\lambda} = [h, \phi,\tau_1,\nu_1]$, where $h=|\alpha_1|,~\phi = \angle \alpha_1$, we can rewrite \eqref{eq:zkl} 
as
$$
    z_{k,l} =  s_{k,l}
    + \tilde{w}_{k,l},
$$
where 
$
    s_{k,l} =  h e^{j\phi}
    e^{-j2\pi f_{k}\tau_1}
    e^{j2\pi l T_s \nu_1}
$.
The Fisher information matrix can then be obtained by computing \cite{Gaudio2019ofdmotfs}
\begin{equation}\label{eq:fim_formu}
    {\bf{F}}^m_{i,j}(\bm{\lambda}) = \frac{2}{\sigma^2}   \Re\left\{
    \sum_k\sum_l \left[\frac{\partial s_{k,l}}{\partial \bm{\lambda}_i}\right]^*
    \left[\frac{\partial s_{k,l}}{\partial \bm{\lambda}_j}\right]
    \right\}, m\in\{1,2\},
\end{equation}
and given on the top of the next page, where the indices in summation takes only DMRS REs for \textit{scenario 1}, while all RE indices are considered for \textit{scenario 2}.

%\begin{equation}\label{eq:fim_expres}
\begin{figure*}
\begin{equation}
\small
{\bf{F}}^m
=\frac{2}{\sigma^2}
\begin{bmatrix}\label{eq:fim_matrix}
KL & 0 & 0 & 0\\
0 & h^2KL & -2\pi h^2\Delta f\underset{(k,l)}{\sum \sum} k & 2\pi h^2T_s\underset{(k,l)}{\sum \sum} l\\
0 & -2\pi h^2\Delta f\underset{(k,l)}{\sum \sum} k & (2\pi)^2 h^2{\Delta f}^2\underset{(k,l)}{\sum \sum} k^2 & -(2\pi)^2 h^2\Delta fT_s\underset{(k,l)}{\sum \sum} lk\\
0 & 2\pi h^2T_s\underset{(k,l)}{\sum \sum} l & -(2\pi)^2 h^2\Delta fT_s\underset{(k,l)}{\sum \sum} kl & (2\pi)^2 h^2T_s^2 \underset{(k,l)}{\sum \sum}  l^2\\
\end{bmatrix}
, (k,l) \in \left\{ 
  \begin{array}{ c l }
    \text{DMRS REs} & m = 1 \\
    \text{All REs}  & m = 2
  \end{array}
\right.
\end{equation}
\end{figure*}

\subsection{Estimation Error}\label{sec:esterr} 

Let $\MSE_m(\bm{\lambda_}n)$ represent the \gls{mse} of estimating the $n$-th parameter, $n \in [1,4]$ of $\bm{\lambda} = [h, \phi,\tau_1,\nu_1]$, in the $m$-th, $m\in\{1,2\}$, scenario.
\begin{proposition}
For a given MCS, the PUSCH sensing performance is given by
\begin{equation}\label{eq:mse_exp}
  \MSE(\bm{\lambda_}n) = (1-\rho) \MSE_1(\bm{\lambda}_n)
  + \rho \MSE_2(\bm{\lambda}_n),
\end{equation}
where $\rho = \frac{1-\prod_{i=1}^{4}P_i}{\E[X]}$,
$P_i$ denotes the decoding error probability
in the $i$-th \gls{harq} round, and $\E[X]$ is given in \eqref{eq:rounds}. 
\end{proposition}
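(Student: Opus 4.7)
The plan is to express the slot-averaged sensing MSE as a mixture over the two operating regimes defined in Section~\ref{sec:sensing}---scenario~1 (DMRS only, log-likelihood \eqref{eq:llrs1}) and scenario~2 (DMRS plus reconstructed data, log-likelihood \eqref{eq:llrs2})---and then identify the mixing weight $\rho$ as the long-run fraction of slots in which the TB is successfully decoded under the HARQ scheme of Section~\ref{sec:5gpusch}. Because the scenario-conditional error variances $\MSE_1(\bm{\lambda}_n)$ and $\MSE_2(\bm{\lambda}_n)$ are already characterized (implicitly through the two Fisher information matrices in \eqref{eq:fim_matrix}), the remaining task is purely the combinatorial accounting of slots produced by the HARQ process.

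First I would formalize the HARQ bookkeeping: for a fixed MCS, round $i$ has decoding error probability $P_i$, the rounds are assumed independent, a TB is ultimately decoded with probability $1-\prod_{i=1}^{4}P_i$, and it consumes a random number $X$ of slots whose mean is \eqref{eq:rounds}. Each slot belongs to exactly one regime: if the TB is decoded in that slot, the gNB can regenerate the data symbols and the slot is in scenario~2; otherwise (a failed retransmission attempt, or any slot of a TB that exhausts all four rounds without success) it is in scenario~1. Hence each TB contributes exactly one scenario-2 slot when it succeeds and none when it fails.

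Next I would invoke a standard renewal / law-of-large-numbers argument over $N\to\infty$ independent TB attempts. The expected number of slots consumed is $N\,\Exp[X]$ and the expected number of scenario-2 slots is $N\bigl(1-\prod_{i=1}^{4}P_i\bigr)$. Taking the ratio identifies the steady-state probability that a uniformly chosen slot is scenario~2 as
\[
\Pr(\text{scenario }2)=\frac{1-\prod_{i=1}^{4}P_i}{\Exp[X]}=\rho,
\]
and the complementary event probability as $1-\rho$. The law of total expectation applied to the per-slot squared error of the estimator of $\bm{\lambda}_n$, conditioned on the regime of the slot, then gives exactly \eqref{eq:mse_exp}.

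The main obstacle I foresee is the definition of ``scenario~2 slot.'' After a successful decode in HARQ round $i$, the transmitted symbols in the $i-1$ previous retransmissions of that TB are, in principle, also recoverable and could be retroactively promoted to scenario~2; under that alternative convention the numerator of $\rho$ becomes $\Exp[X]-4\prod_{i=1}^{4}P_i$ rather than $1-\prod_{i=1}^{4}P_i$. Pinning down the stated form of $\rho$ therefore requires an explicit commitment to causal, per-slot sensing processing without back-substitution of decoded symbols across HARQ rounds. Once that convention is fixed, the argument reduces to the elementary averaging above.
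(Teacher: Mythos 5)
Your proof is correct and follows essentially the same route as the paper's: a law-of-total-expectation decomposition over the two sensing scenarios, with the weight $\rho$ identified as the fraction of slots in which the TB is decoded. The paper reaches $\rho$ by first conditioning on TB success/failure and then averaging over the $\E[X]$ rounds, whereas your renewal-reward ratio $\bigl(1-\prod_{i=1}^{4}P_i\bigr)/\E[X]$ arrives at the same weight somewhat more cleanly, and your remark about excluding retroactive promotion of earlier HARQ slots makes explicit a convention the paper leaves implicit.
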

\begin{proof}
If the TB is not decoded at the end of the final \gls{harq} round, we are in scenario 1.
This occurs with a probability $\Pi_{i=1}^{4}P_i$, and the \gls{mse} in this case is $\MSE_1(\bm{\lambda}_n)$.
In the complimentary case, with probability ${1-\Pi_{i=1}^{4}P_i}$, let $\E[X]$ denote the expected number of \gls{harq} rounds to decode the TB. 
The average MSE in this case is given by
$$
 \left(1-\frac{1}{\E[X]}\right)\MSE_1(\bm{\lambda}_n)+ \frac{\MSE_2(\bm{\lambda}_n)}{\E[X]}. 
%+p\MSE_1(\bm{\theta}_n)
$$
From the above arguments and using the total probability law we obtain \eqref{eq:mse_exp}.
\end{proof}
Since all REs are used in scenario 2, $\MSE_2(\bm\lambda_i)\leq\MSE_1(\bm\lambda_i)$. For an unbiased estimator, 
\begin{equation}\label{eq:crlbexp}
\MSE_m(\bm{\lambda}_n) \geq {({\bf{F}}^m_{n,n})}^{-1},~n \in [1,4], m \in\{1,2\}.
\end{equation}
From \eqref{eq:mse_exp} and \eqref{eq:crlbexp} we can find a lower bound for the \gls{mse} of the proposed \gls{isac} scheme.
Note that we are restricted to MCS index with QPSK modulation in this paper.

%--------------------------------------------%
\section{Numerical results}\label{sec:result}
%---------------------------------------------%

In this section, we utilize the MATLAB 5G Toolbox™ \gls{pusch} functions to assess the performance of the proposed \gls{isac} system. The system parameters are listed in Table \ref{tab:1}.
The \gls{mcs} table is taken from \cite{3gpp2018nr_38_214}.
The \gls{pusch} and \gls{dmrs} configuration parameters used in the simulations are
listed in Table \ref{tab:2}.
We can vary the number of DMRS in a slot by changing the 
{\tt DMRS additional position} parameter. It represents the number of extra DMRS in a slot in addition to the default DMRS symbol. 

The system is evaluated at different \gls{snr} values, and for each SNR, 2000 \gls{pusch} transmissions (100 frames) are simulated.
For every iteration, the channel is generated according to
\eqref{eq:chan_rxgrid} with $P=2$ and random sensing parameters.
The LoS and the reflected path off the target have
SNRs $\SNR_0 \triangleq |\alpha_0|^2/\sigma^2$ and  
$\SNR_1 \triangleq |\alpha_1|^2/\sigma^2$, respectively.
For communication, the SNR is defined as $\SNR_c \triangleq \SNR_0+\SNR_1$. For sensing, we assume that the LoS path has been successfully removed, and the measurement is generated as in \eqref{eq:zkl} with $\SNR_1$.
In the simulated scenario, we have $\SNR_c = 10 \SNR_1$. 

For a given \gls{mcs}, the \gls{pusch} ISac performance is 
evaluated in terms of range and Doppler \gls{rmse} for sensing, and average throughput for communication. 
In Fig, \ref{fig:mcs_range} the range ($d_1 = c\tau_1$) RMSE and throughput of the proposed method are plotted for PUSCH MCS 0 with 2 and 4 pilot symbols. The curve with $\text{BLER} = 0$ acts as a lower bound, corresponding to the idealized perfect TB decoding scenario.
Note that the communication SNR is given by $\SNR_c$ while for sensing it is $\SNR_1$.
Fig \ref{fig:mcs_0_1} shows the Doppler RMSE and throughput plots for MCS 0 with two and four DMRS symbols. 
From these figures, it can be seen that while the additional DMRS symbols tend to improve the sensing performance, they result in lower throughput, especially when the SNR is increased. 
In Fig \ref{fig:mcs_comp} we compare the ISaC performance by varying the MCS for fixed DMRS symbols.
One can see the tradeoff relation between the MCS index (coding rate), SNR, sensing RMSE, and throughput.
Finally, Fig \ref{fig:mse_lb} shows the Doppler RMSE and the numerically evaluated lower bound using the analysis in Section \ref{sec:esterr}.

\begin{small}
\begin{table}[htbp]
\caption{System Parameters}
\centering
\begin{tabular*}{\columnwidth}{@{\extracolsep{\fill}}cc}
\toprule
Parameters & Values \\
\midrule
Subcarrier Spacing ($\Delta f$) & 30 KHz \\
% Carrier frequency ($f_c$) & 3.69 GHz \\
Number of PRBs & 106 \\
Subcarriers ($K$)  & 1272 \\
OFDM Symbols in a slot ($L$) & 14 \\
Cyclic Prefix & Normal\\
\bottomrule
\end{tabular*}
%\vspace{-0.2in}
\label{tab:1}
\end{table}
%\vspace{-0.4in}
\begin{table}[htbp]
\caption{PUSCH and DMRS Parameters}
\centering
\begin{tabular*}{\columnwidth}{@{\extracolsep{\fill}}cc}
\toprule
Parameters & Values \\
\midrule
Mapping & type A \\
Symbol allocation & [0,\ldots,13] \\
PRB allocation & [0,\ldots,105]  \\
Modulation  & QPSK \\
DMRS additional position & \{1,2,3\} \\
DMRS configuration and length & type 1, 1\\
%DMRS typeA position & 2\\
%DMRS length & 1\\
\bottomrule
\end{tabular*}
%\vspace{-0.2in}
\label{tab:2}
\end{table}
\end{small}

\begin{figure}
    \centering

    \subfigure%[Doppler RMSE]
    {
        \includegraphics[width=0.45\textwidth]{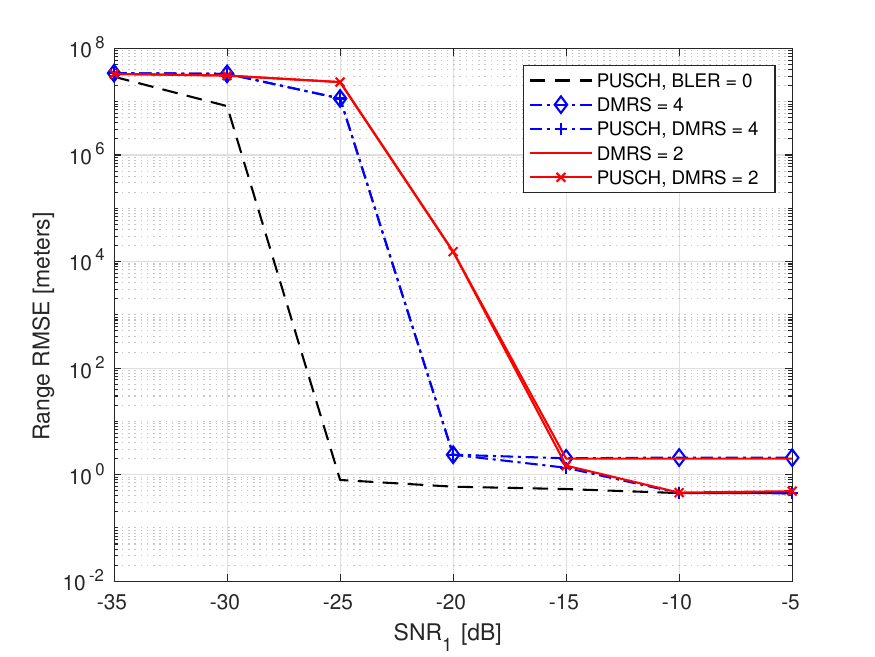}
        \label{fig:mcs_comp_s}
    }
    \subfigure%[Throughput]
    {
        \includegraphics[width=0.45\textwidth]{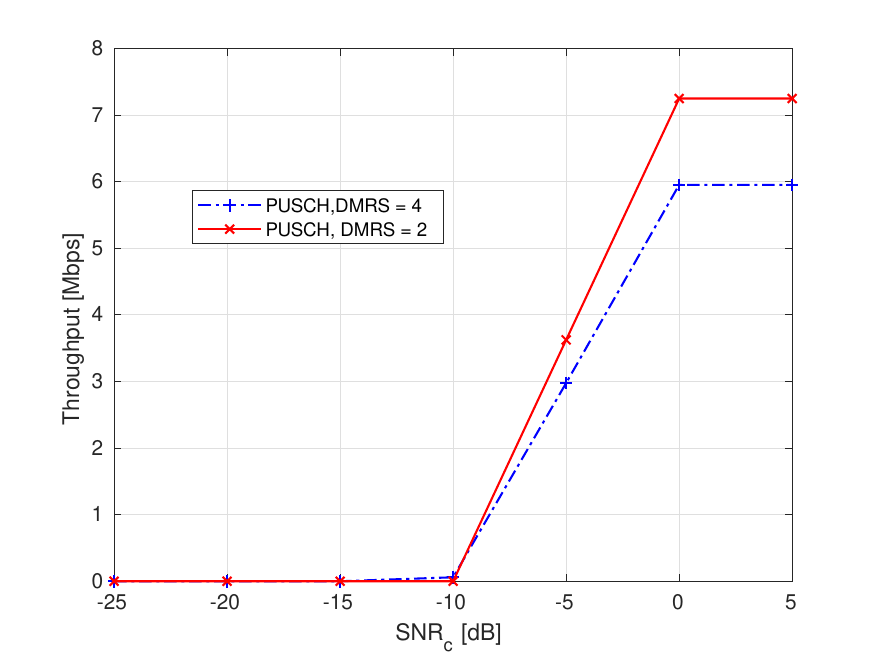}
        \label{fig:mcs_comp_r}
    }
    \caption{PUSCH range RMSE and throughput for MCS 0}
    \label{fig:mcs_range}
\end{figure}

\begin{figure}
    \centering

    \subfigure%[Doppler RMSE]
    {
        \includegraphics[width=0.45\textwidth]{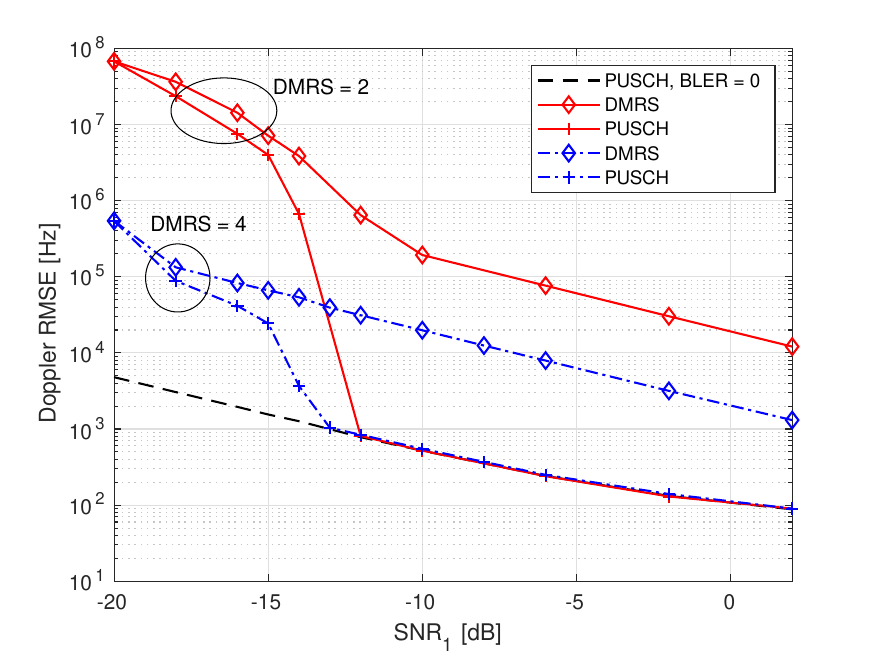}
        \label{fig:mcs_0_1_s}
    }
    \subfigure%[Throughput]
    {
        \includegraphics[width=0.45\textwidth]{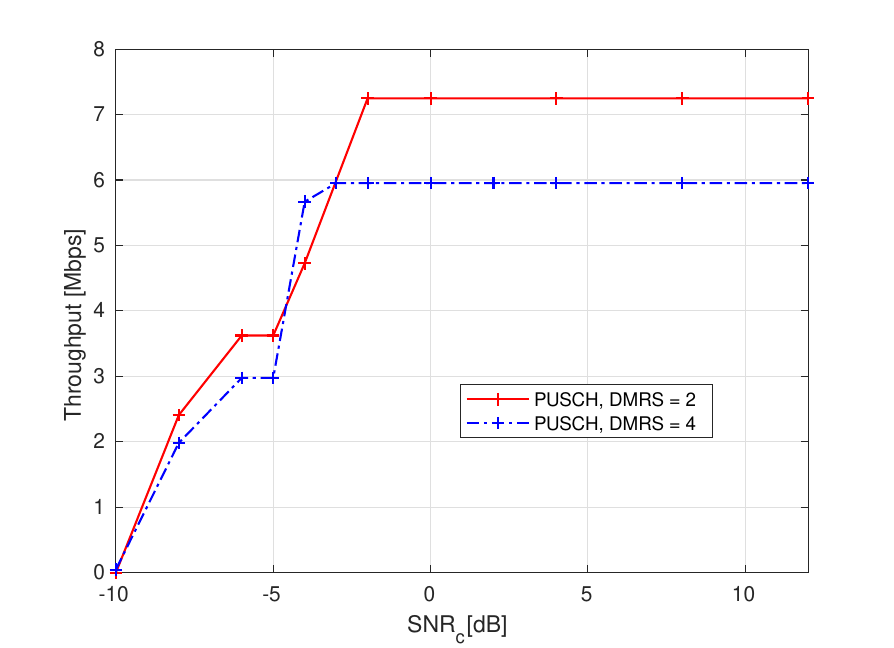}
        \label{fig:mcs_0_1_c}
    }
    \caption{PUSCH Doppler RMSE and throughput for MCS 0}
    \label{fig:mcs_0_1}
\end{figure}

\begin{figure}
    \centering

    \subfigure%[Doppler RMSE]
    {
        \includegraphics[width=0.45\textwidth]{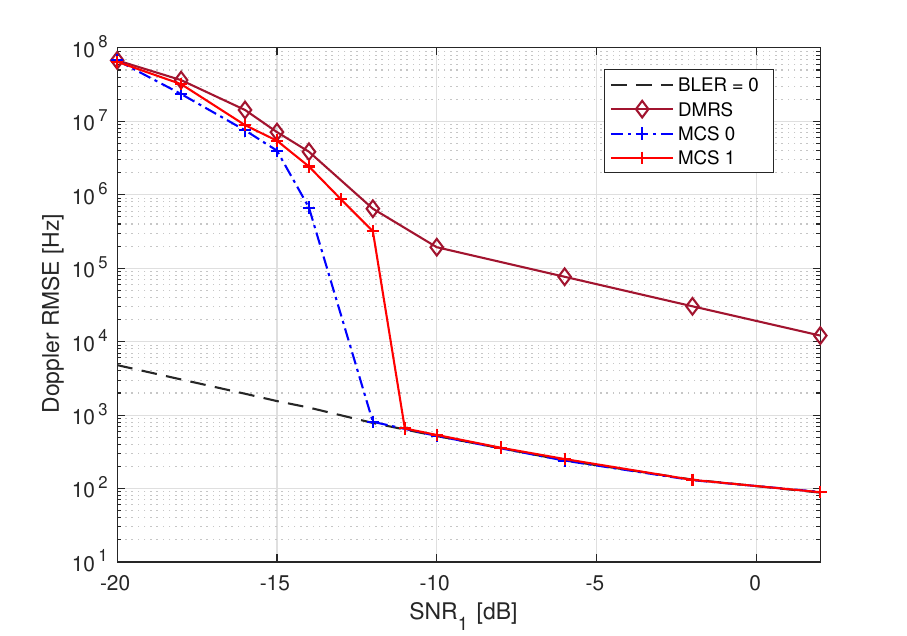}
        \label{fig:mcs_comp_s}
    }
    \subfigure%[Throughput]
    {
        \includegraphics[width=0.45\textwidth]{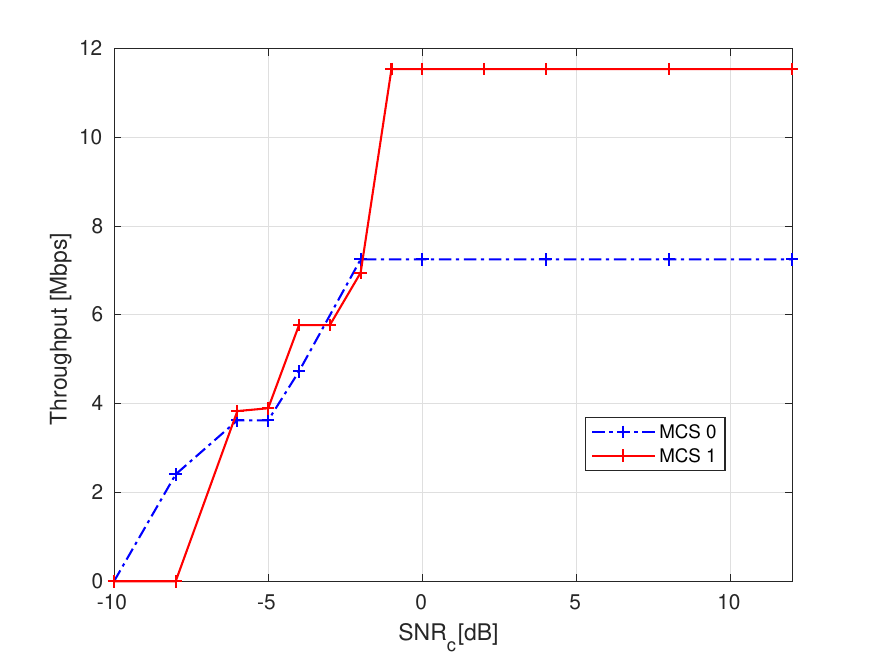}
        \label{fig:mcs_comp_r}
    }
    \caption{PUSCH ISaC performance for MCS 0 and 1}
    \label{fig:mcs_comp}
\end{figure}

\begin{figure}[t]
\centerline{\includegraphics[width=0.4\textwidth]{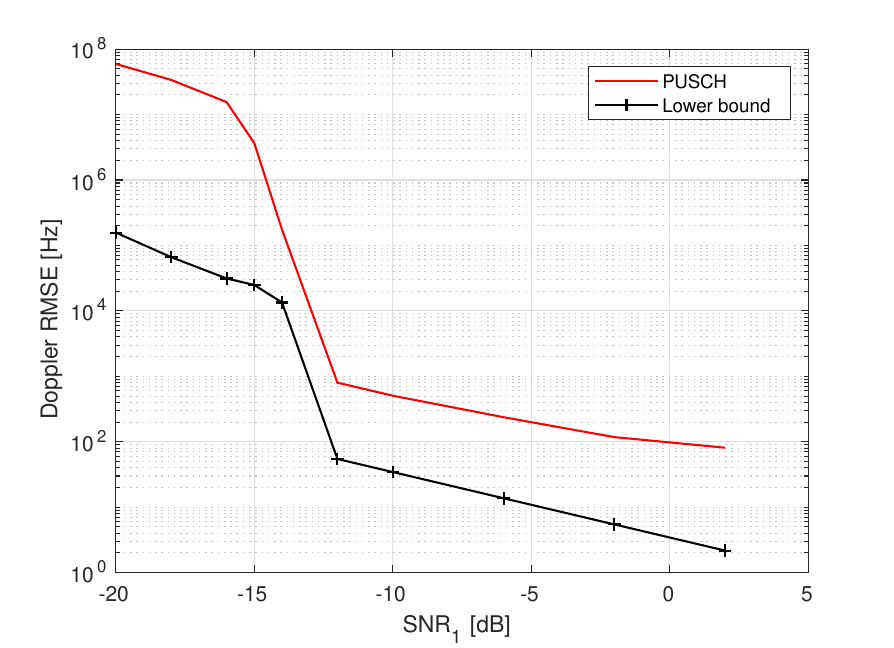}}
\caption{PUSCH Doppler RMSE and its lower bound, MCS 0}
\label{fig:mse_lb}
\end{figure}

%--------------------------------------------%
\section{Conclusion} \label{sec:concl}
%---------------------------------------------%
Bistatic ISaC systems can provide sensing functionality in cellular networks without major modifications in the system design and hardware.
We have proposed a PUSCH-based ISaC framework where
sensing is performed based on the decoded PUSCH data and DMRS.
Numerical results show that significant sensing performance can be obtained while not sacrificing the throughput.
Moreover, an interesting tradeoff between coding rate, channel variation, and SNR is observed, which is being evaluated in an extension of this work.

\bibliographystyle{IEEEtran}
\bibliography{References}

\end{document}